\newtheorem{proposition}{Proposition}
\newcounter{hypA}
\newenvironment{hypA}{\refstepcounter{hypA}\begin{itemize}
  \item[({\bf A\arabic{hypA}})]}{\end{itemize}}
\begin{document}
\title{The Time Machine: A Simulation Approach for Stochastic Trees} 
\author[1]{Ajay Jasra}
\author[2]{Maria De Iorio}
\author[2]{Marc Chadeau-Hyam}

\affil[1]{\small{Department of Mathematics, Imperial College London, London, SW7 2AZ, UK.}\newline
E-Mail:\emph{\texttt{a.jasra@ic.ac.uk}}}
\affil[2]{\small{School of Public Health, Imperial College London, London, W2 1PG, UK.}\newline
E-Mail:\emph{\texttt{m.deiorio@ic.ac.uk}},\emph{\texttt{m.chadeau@ic.ac.uk}}}

\date{}
\maketitle

\begin{abstract}
In the following paper we consider a simulation technique for stochastic trees.
One of the most important areas in computational genetics is the calculation
and subsequent maximization of the likelihood function associated to such
models. This typically consists of using importance sampling (IS) and sequential Monte Carlo (SMC) techniques. The approach proceeds by simulating the tree, backward in time from observed data, to a most recent common ancestor (MRCA).
However, in many cases, the computational time and variance
of estimators are often too high to make standard approaches useful. In this
paper we propose to stop the simulation, subsequently yielding biased
estimates of the likelihood surface. The bias is investigated from a theoretical point of view. Results from simulation studies are also given to investigate the balance between loss of accuracy, saving in computing time and variance reduction.\\
\textbf{Key Words}: Stochastic Trees, Sequential Monte Carlo, Coalescent.
\end{abstract}

\pagestyle{myheadings}
\thispagestyle{plain}

\section{Introduction}
There is currently much interest in performing ancestral inference from molecular population genetic data. To facilitate this inference, there has been an explosion of research in developing computationally efficient methods. 
These techniques are designed either to compute the likelihood, for maximum likelihood estimation, of a sample of genes or for deriving the posterior distribution on parameters in coalescent models, which describe the ancestry of the genes. Broadly speaking there are three main approaches to inference in molecular population genetics: (i) importance sampling for likelihood evaluation, whose application in population genetics was pioneered  by
(Griffiths \& Tavar\'e, 1994a,b,c) (ii) Markov chain Monte Carlo methods (e.g. Kuhner et al.~(1995), Wilson \& Balding (1998)) (iii) Approximate Bayesian Computation (ABC) (Del Moral et al.~(2009), Marjoram et al.~(2003)). See Stephens (2004) for a review.

In this paper we concentrate on likelihood-based methods. 
Molecular data  have a sampling distribution which is a mixture over possible ancestries. The state space of the ancestries is huge and closed-form expressions are available only in the simplest cases. The objective is to calculate a parameter $\theta\in\Theta\subseteq\mathbb{R}^{d_{\theta}}$ ($d_{\theta}\in\mathbb{Z}^+$)
such that
\begin{equation}
l(y;\theta^*) := \sup_{\theta\in\Theta} \int_{F}l^c(z,y;\theta)\pi_{\theta}(z)dz
\label{eq:like}
\end{equation}
for some observed genetic data $y\in E$, parameter $\theta\in\Theta$, probability density $\pi_{\theta}$ on $F$ and $l^c:E\times F\times
\Theta\rightarrow\mathbb{R}^+$ an integrable function. 
Note that $y$ is typically the genetic types of a random sample of chromosomes.
In addition, $z=(z_0,\dots,z_k)$ denotes the coalescent history, i.e.~the set of ancestral configurations at the embedded events in a Markov process where coalescence, mutations or other events take place. $z_k$ denotes the current state, while $z_0$ is the state when a singleton ancestor is reached.
 
Statistical inference associated to $l(y;\theta)$ can be regarded as a missing data problem and could, in principle,
be tackled by the EM algorithm (Dempster et al.~1977) and its Monte Carlo extensions (e.g.~Fort \& Moulin\'es (2003)).
However, $z$, the stochastic tree, can be computationally expensive to simulate and such techniques are typically avoided. For example, for the coalescent (Kingman, 1982) and ancestral
recombination graphs (e.g.~Fearnhead \& Donelly (2001)), the standard approach is to use
IS (De Iorio \& Griffiths, 2004a; Griffiths \& Tavar\'e, 1994a; Stephens \& Donelly, 2000) and SMC methods (Chen et al.~2005)
to approximate \eqref{eq:like}. 
These approximations are usually computed on a discrete grid $\Delta_{\theta}\subseteq\Theta$
and the estimate of $\theta^*$ corresponds to the largest approximated likelihood on $\Delta_{\theta}$. See also Olsson \& Ryd\'en (2008) for an alternative procedure for state-space models.

Techniques such as ABC 
and composite likelihood (Wiuf, 2006) 
do not give solutions which are exact w.r.t.~the original model whilst, when possible, exact inference is of interest. This is because, given a reasonable stochastic
model, the approach allows investigators to exactly (up-to a numerical error) average over the uncertainty
in the tree structure when estimating genetic parameters of interest.
One of the main drawbacks of existing exact IS/SMC schemes
is the simulation of the tree backward in time, from observed data, until the tree coalesces.
In many scenarios, especially for large data sets, when getting close to the top of the tree, it often takes a long time to coalesce. This is due to genetic parameters (e.g. mutation rates) that can be very large relative  to the size of the data. Consequently, it can take a very long time to simulate the tree back to the MRCA.
As a result, the variance
of the estimate of the likelihood can be higher than is desirable, along with long CPU times.
It should be noted that the calculation of the likelihood at these points, $\theta\in\Theta$, can be inferentially
important. In addition, it is seldom possible to speed up the simulation via
importance sampling as the variance of the weights can become too large. That is, by adapting the parameter of the proposal to lead to a fast coalescence, the discrepancy between the true process and the proposal leads to a very inefficient algorithm w.r.t.~variance.

\subsection{The Time Machine}

The approach proposed in this paper is based on IS. Stephens \& Donelly (2000) proposed a way to use IS efficiently to simulate ancestral trees by characterizing an optimal proposal distribution and similar methods have since been developed for a variety of genetic scenarios (e.g.~De Iorio \& Griffiths (2004a,b)). The basic idea is to define an efficient proposal distribution on ancestral histories which allows us to reconstruct Markov histories backwards in time from the sample $y$ to an MRCA.  

We introduce a stopping time in the IS proposal, backward in time,
to stop the simulation before the MRCA is reached. Then using a simple stopped identity, forward in
time we are able to characterize the bias introduced in the evaluation of the likelihood due to stopping the simulation of the stochastic tree.
The bias can be understood by considering two aspects:
\begin{enumerate}
\item{The underlying mixing of the evolutionary process\label{pt:evpr}}
\item{The last exit time distributions on the process. \label{pt:lepr}}
\end{enumerate} 
In the context of (\ref{pt:evpr}), the idea is that for many models, close to the top
of the tree, the process is able to forget its initial condition. As a result,
stopping the simulation is reasonable, because the place where it is stopped is forgotten by the process forward in time; we formalize these ideas later on.
In reference to (\ref{pt:lepr}), the more information there is on the true
marginal distributions of the process, the more it is possible to reduce
the bias. Ideas from the theory of population genetics models
(Ethier \& Griffiths 1987; Ewens, 1972) will be used to achieve the latter.
 
In reference to a comment of Edwards (2000), our method is termed the `time machine'. This is because, estimation is performed saving the simulation time of going all the way back in time to the MRCA.
A similar
idea, in the context of filtering, can be found in the work of Olsson et al.~(2008) and also in option pricing Avramidis \& L\'Ecuyer (2006).
In our context, we have a simpler underlying process than in filtering, but the ergodicity
conditions considered there do not apply here. The mixing conditions that they require only
apply locally and thus the proofs have to be modified.
Recall that approximate tools for inference from stochastic
trees (e.g.~Del Moral et al.~(2009), Meligkotsidou \& Fearnhead (2007), Tavar\'e et al.~(2000)) are available. However,
our approach is `less approximate', in that our 
point-wise estimate of the likelihood is significantly less-biased, but costing more in computational-time.

This paper is structured as follows. In Section \ref{sec:motex} we introduce a motivating example, the coalescent model, which will help to illustrate our ideas. In
Section \ref{sec:stopsimos} our methodology is described; Section \ref{sec:bias} features an analysis of the bias of the approach;
Section \ref{sec:simos} presents a simulation study to demonstrate the performance of our algorithm and we conclude the paper in Section \ref{sec:summary}.
Appendix 1 contains some proofs, Appendix 2 details of our numerical implementations.
Our ideas are illustrated in the context of the coalescent. However, the formulation is kept as general as possible, as the framework can be extended
to other tree models, such as the infinite sites model. In Appendix 3 we show how this can be done. 

\section{Motivating Example}\label{sec:motex}

The coalescent model is used as a motivating example for our work.
Some notations are first introduced. In particular, we consider the case in which the type space $E=\{1,\ldots,d\}^n$ for the collection of the $n\in\mathbb{Z}^+$ genes/chromosomes is finite and the only genetic process of interest is mutation.  

\subsection{Notation}\label{sec:motivation}

Denote by $(E,\mathscr{E})$ a measurable space. 
For two $\sigma-$finite measures $\lambda_1$ and $\lambda_2$ mutual absolute continuity
is written $\lambda_1\sim\lambda_2$ and the Radon-Nikodym derivative as
$d\lambda_1/d\lambda_2$.
Given a Markov kernel
$P:E\times\mathscr{E}\rightarrow[0,1]$, let
$P^0(x,\cdot)=\delta_x(\cdot)$, (the Dirac measure)
and write the composition for $j\geq 1$
as $P^j(x,\cdot) = \int P(x,dy) P^{j-1}(y,\cdot)$, with a corresponding composition of inhomogeneous kernels as $P_{1:j}$. Write $\mathbb{I}_A$ as the indicator of a set. 
For $\text{Card}(E)<\infty$ 
$$
\mathcal{S}(E)=\{P=(p_{ij})_{i,j\in E}:p_{ij}\geq 0, \sum_{l\in E}p_{il}=1\cap
\exists 
\psi_i\geq 0, \forall i\in E, \sum_{l\in E} \psi_l = 1, 
\psi P = \psi \}
$$ 
denotes the class of stochastic matrices
for which there exist a stationary distribution $\psi$.
The collection of bounded and measurable function are denoted
$\mathcal{B}_b(E)$. The supremum norm is written $\|f\|_{\infty}
=\sup_{x\in E}|f(x)|$. The total variation distance between
two probability measures $\lambda_1$ and $\lambda_2$ on $(E,\mathscr{E})$ is $\|\lambda_1-\lambda_2\|_{tv}
:=\sup_{A\in\mathscr{E}}|\lambda_1(A)-\lambda_2(A)|$.
Given a probability
measure $\lambda$, and a $j\in\mathbb{Z}^+$, the product measure is written $\lambda^{\otimes j}:= \lambda^{\otimes j-1}\times\lambda$, $\lambda^{\otimes -1}:=1$.
The vector notation $x=(x_1,\dots,x_j)=x_{1:j}$ is adopted.
 In addition, let the $d$-dimensional vector $e_i=(0,\dots,0,1,0,\dots,0)$
where the 1 is in the $i^{th}$ position. The $\mathbb{L}_1$ norm of a vector
is written $|x_{1:j}|_1:=|x_1|+\cdots+|x_j|$.
For $d\in\mathbb{Z}^+$, $\mathbb{T}_d=\{1,\dots,d\}$.

\subsection{Identity of Interest}\label{sec:id_interest}
Define the tree model on the measurable space $(F,\mathscr{F})$,
with $\mathscr{F}=\sigma(F)$.
Let $n\geq 2$. The basic idea is to maximize, w.r.t $\theta\in\Theta$, the quantity
\begin{equation}
l(y_{1:n};\theta) = 
\sum_{k\in\mathcal{K}_n}\int_{F_{n}^k} \pi_{\theta}(z_{1:k})\mathbb{I}_{\{y_{1:n}\}\times
B_{n+1}}\big(z_{k-1},t(z_{k})\big)l^c(y_{1:n},z_{k-1};\theta)dz_{1:k}
\label{eq:likelihood_id}.
\end{equation}
where the observed data is $y_{1:n}\in E$, $t:F_n\rightarrow\mathbb{Z}^{d_t}$,
normally the identity, for $m\in\mathbb{Z}^+$
$$
B_m = \{x\in\mathbb{Z}^{d_t}:|x_{1:d}|_1=m\}.
$$
and
$F=\bigcup_{k\in\mathcal{K}_{n}}\big(\{k\}\times F_n^k\big)$ for some $E\subset
F_n$
and $\mathcal{K}_n\subset\mathbb{Z}^+$
depending upon the model under study.
In all of our examples, $\pi_{\theta}(z_{1:k})$ corresponds to the density of a non-decreasing (in some sense) Markov process in discrete time, stopped at a random time $k\in\mathcal{K}_n$; that is
$$
\pi_{\theta}(z_{1:k}) = p_{\theta}(z_1)
\bigg\{\prod_{j=2}^k p_{\theta}(z_{j-1},z_{j})\bigg\}\mathbb{I}_{B_{n+1}}\{t(z_{k})\}.
$$
Throughout the article it is assumed that 
$\sum_{k\in\mathcal{K}_n}\int_{F_{n}^k}\pi_{\theta}(z_{1:k})=1$, i.e.~that
the stopping time is a.s.~finite w.r.t $\pi_{\theta}$. The stopping time will be determined by the first time that the tree is of `size' $n+1$.

Introduce an absolutely continuous distribution $Q_{\theta}$ on $F$ and
sample $(z_{1:k^{(i)}}^{(i)})_{1\leq i \leq N}$ according to $Q_{\theta}$, then the IS estimator of
$l(y;\theta)$ is
$$
S^N(\widetilde{l}_{\theta}) = \frac{1}{N}\sum_{i=1}^N 
\bigg[\frac{\pi_{\theta}(z_{1:k^{(i)}}^{(i)})\mathbb{I}_{\{y\}\times
B_{n+1}}(z_{k^{(i)}-1:k^{(i)}}^{(i)})l^c(y_{1:n},z_{k^{(i)}-1}^{(i)};\theta)}{Q_{\theta}(z_{1:k^{(i)}}^{(i)})}\bigg]
$$
where
$$
\widetilde{l}_{n,\theta}(z_{1:k},k) = \frac{\pi_{\theta}(z_{1:k})\mathbb{I}_{\{y\}\times
B_{n+1}}(z_{k-1:k})l^c(y_{1:n},z_{k-1};\theta)}{Q_{\theta}(z_{1:k})}
$$
and 
$$
S^N(\cdot)=\frac{1}{N}\sum_{i=1}^N \delta_{z_{1:k^{(i)}}^{(i)}}(\cdot)
$$ 
the empirical measure of the simulated samples. 

\subsection{The Coalescent Model} 
Denote the number of genes of type $i$ at event $j$ of the process
as $z_j^i$, with $z_{j}=(z_j^1,\dots,z_{j}^d)$.
The objective is to find the genetic parameters $\theta=(\mu,P)$
where $\mu\in\mathbb{R}^+$ and $P\in\mathcal{S}(\mathbb{T}_d)$,
$\Theta=\mathbb{R}^+\times \mathcal{S}(\mathbb{T}_d)$. $\mu$ is the mutation rate per chromosome per generation and mutations along the edges of the tree occur according to a Markov chain with transition matrix $P$. 

The various components of the identity (\ref{eq:likelihood_id}) for the coalescent model are defined as:
\begin{eqnarray*}
F & = &\bigcup_{k\in\mathcal{K}_{n}}\bigg(\{k\}\times F_n^{k}\bigg)\\
F_n & = &\{z^{1:d}\in(\mathbb{Z}^+\cup\{0\})^d : 2\leq |z^{1:d}|_1\leq n+1\}\\
\mathcal{K}_{n} & = & \{n, n+1, \dots\}
\end{eqnarray*}
with $t$ the identity function, 
$$
l^c(y_{1:d}^n,z;\theta) = \left\{ \begin{array}{ll}
\frac{\prod_{j=1}^d y_j^n!}{n!} & \textrm{if}\quad y_{1:d}^n=z\\
0 & \textrm{otherwise}
\end{array}\right.
$$
and finally,
$$
\pi_{\theta}(z_{1:k}) = \mathbb{I}_{\{z:|z|_1=n+1\}}(z_k)\}\bigg\{\prod_{j=2}^k p_{\theta}(z_{j-1},z_j)
\bigg\}\int p_{\theta}(z_0)p_{\theta}^1(z_0,z_1)dz_0
$$
where
$$
p_{\theta}(z_{j-1},z_j) = \left\{\begin{array}{ll}
\frac{z_{j-1}^i}{|z_{j-1}|_1}\frac{\mu}{|z_{j-1}|_1-1+\mu}p_{il} & \textrm{if}\quad z_j =
z_{j-1} - e_i+e_l\\ 
\frac{z_{j-1}^i}{|z_{j-1}|_1}\frac{|z_{j-1}|_1-1}{|z_{j-1}|_1-1 + \mu} & \textrm{if}\quad z_j = z_{j-1} + e_i\\ 
0 & \textrm{otherwise}.
\end{array}\right.
$$
$p_{\theta}^1(z_0,z_1)=\mathbb{I}_{\{z:z=2z_0\}}(z_1)$ and
$$
p_{\theta}(z_0) = \left\{\begin{array}{ll}
\psi_{\theta}(i) & \textrm{if}\quad z_0=e_i\\
0 & \textrm{otherwise}.
\end{array}\right.
$$
Write $p_{\theta}^1(z_1)=\int p_{\theta}(z_0)p_{\theta}^1(z_0,z_1)dz_0$ (here $dz$ is counting measure).
Note that for any fixed $n\geq 2$, $\theta\in\Theta$, $\sum_{k\in\mathcal{K}_n}\int_{F_n^k} \pi_{\theta}(z_{1:k})dz_{1:k} = 1$.
For simplicity of exposition, the results are given with only mutation.
However, they can be easily extended to the case of migration as well (e.g.~De Iorio \& Griffiths (2004b)).

\subsection{Likelihood Computation}\label{sec:likelihood_computation}

To compute the likelihood, for a given $\theta\in\Theta$, importance sampling is adopted. 
An importance distribution, $Q_{\theta}$, is introduced to simulate the tree backward in time to the MRCA; this ensures that the data is hit. 

In details, let $x$ denote the reverse chain backward in time and write $x\in F_n$ instead of $z$ (this convention is used throughout the article, see also Figure \ref{fig:coalgraph}). Let:
$$
Q_{\theta}^{y_{1:d}^n}(x_{1:k-1}) =  \mathbb{I}_{\{y_{1:d}^n\}}(x_{1})\bigg\{\prod_{j=2}^{k-1}
q_{\theta}(x_{j-1},x_{j})\bigg\} \mathbb{I}_{\{x\in(\mathbb{Z}^{+}\cup\{0\})^d: x = e_i, i\in\mathbb{T}_d\}}(x_{k-1})
$$
for some Markov transition $q_{\theta}$; see Stephens \& Donelly (2000) for the optimal $Q_{\theta}$.
Then the likelihood is
\begin{eqnarray*}
l(y_{1:d}^n;\theta) & = & \frac{n-1}{n-1+\mu}\frac{\prod_{j=1}^d (y_j^n)!}{n!}\sum_{k\in\mathcal{K}_{n-1}}
\int_{F_{n-1}^{k-1}}p_{\theta}(x_{k-1})\bigg\{\prod_{j=2}^{k-1}\frac{p_{\theta}(x_{j},x_{j-1})}{
q_{\theta}(x_{j-1},x_{j})}\bigg\}\times 
\\ & & 
Q_{\theta}^{y_{1:d}^n}(x_{1:k-1})dx_{1:k-1}.
\end{eqnarray*}
The simulation proceeds by sampling from $q_{\theta}(y_{1:d}^n,\cdot)$
and computing the weight
$$
w_{\theta}(y_{1:d}^n,x_2) = \frac{p_{\theta}(x_{2},y_{1:d}^n)}{q_{\theta}(y_{1:d}^n,x_{2})}
$$
Simulations backward in time are carried out until we reach the MRCA, i.e. when there is only one individual in the sample. This procedure is repeated $N$ times to provide
a Monte Carlo estimator for the likelihood
$$
S^N(\widetilde{l}_{n,\theta})
=
\bigg\{\frac{n-1}{n-1+\mu}\frac{\prod_{j=1}^d (y_j^n)!}{n!}\bigg\}
\frac{1}{N}\sum_{i=1}^N
\bigg[p_{\theta}(x_{k^{(i)}-1}^{(i)})\bigg\{\prod_{j=2}^{k^{(i)}-1}
w_{\theta}(x_{j-1}^{(i)},x_j^{(i)})\bigg\}\bigg]
$$
where $(x_{1:k^{(i)}-1}^{(i)})_{1\leq i \leq N}$ are the simulated samples,
$x_{1}^{(i)}=y^{n}_{1:d}$ for every $i\in\mathbb{T}_N$
and
$$
\widetilde{l}_{n,\theta}(x_{1:k-1},k) = \bigg\{\frac{n-1}{n-1+\mu}\frac{\prod_{j=1}^d (y_j^n)!}{n!}\bigg\}
p_{\theta}(x_{k-1})
\bigg\{\prod_{j=2}^{k-1}
w_{\theta}(x_{j-1},x_j)\bigg\}.
$$
This can be repeated for many $\theta$ using a driving value (Griffiths \& Tavar\'e, 1994) or bridge sampling
ideas (e.g.~Fearnhead \& Donelly (2001)). In addition, to deal with the problem of weight degeneracy (e.g.~Doucet et al.~(2001)) resampling steps can be added. See, for example, Chen et al.~(2005).

\begin{figure}[h]\centering
\setlength{\unitlength}{8cm}
\begin{picture}(1,1)
\put(.5,.75){\line(0,1){0.25}}
\put(0.25,.75){\line(1,0){.5}}
\put(0.75,.75){\line(0,-1){0.5}}
\put(0.25,.75){\line(0,-1){0.1}}
\put(0.125,.65){\line(1,0){0.25}}
\put(0.125,.65){\line(0,-1){0.65}}
\put(0.375,.65){\line(0,-1){0.15}}
\put(0.3125,.5){\line(1,0){0.125}}
\put(0.3125,.5){\line(0,-1){0.5}}
\put(0.4375,.5){\line(0,-1){0.5}}
\put(0.625,.25){\line(1,0){.25}}
\put(0.625,.25){\line(0,-1){.25}}
\put(0.875,.25){\line(0,-1){.25}}
\multiput(0.1,0.5)(0.1,0){8}{\line(1,0){0.05}}
\put(0.95,.95){$\textrm{Forward}$}
\put(0.01,.95){$\textrm{Backward}$}
\put(0.95,.75){$Z_1$}
\put(0.95,.65){$Z_2$}
\put(0.95,.5){$Z_3$}
\put(0.95,.25){$Z_4$}
\put(0.95,.01){$Z_5$}
\put(0.01,.5){$X_3$}
\put(0.01,.25){$X_2$}
\put(0.01,.01){$y_{1:d}^5$}
\put(0.25,.45){$\rho$}
\put(0.7,.45){$\alpha-1$}
\end{picture}
\caption{A Coalescent tree. Here there are no mutations on the tree, and
$B_{3}$ is the set used to stop the simulation, backward in time (see Section \ref{sec:stopcoal}).
The horizontal dotted line represents the random times $\rho$ and $\alpha-1$.}
\label{fig:coalgraph}
\end{figure}
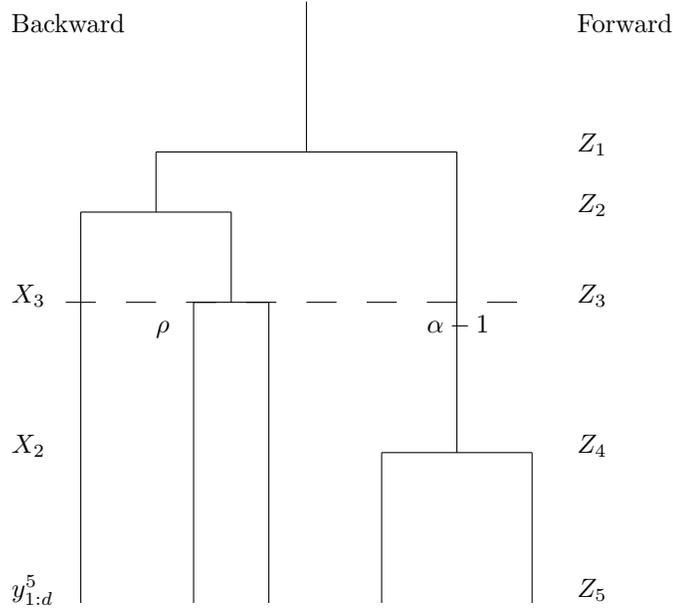

\section{Stopping the Simulation}\label{sec:stopsimos}

It is now detailed how we stop the simulation of the stochastic tree back in time before the MRCA is reached. In the next Section we provide theoretical results and connections to the theory of SMC are established. For the purpose of stopping the simulation, introduce two stopping
times (forwards in time): the first hitting time of the set $B_{n+1}$
$$
\tau := \inf\{k\geq 1:|t(Z_k)|_1=n+1\}
$$
and some stopping time $\alpha$ associated to the hitting of a set $A\in\mathscr{F}$
$$
\alpha := \inf\{k\geq 1:Z_k\in A\}
$$
such that
$$
\mathbb{P}_{\pi_{\theta}}(\alpha < \tau) = 1
$$
where $\mathbb{P}_{\pi_{\theta}}$ is the $\pi_{\theta}-$probability.
For example, in the context of the coalescent, it is suggested to take,
for $m<n$
$$
\alpha = \inf\{k\geq 1:|Z_k|_1=m+1\}. 
$$

\subsection{A Stopped Identity}
Let $\mathbb{E}_{\pi_{\theta}}$ denote the expectations 
w.r.t the process $\{Z_k\}$. Then 
the likelihood (\ref{eq:likelihood_id}) can be written as
$$
l(y_{1:n};\theta) = 
\mathbb{E}_{\pi_{\theta}}\bigg[\mathbb{I}_{\{y_{1:n}\}\times B_{n+1}}\{Z_{\tau-1},t(Z_{\tau})\}
l^c(y_{1:n},Z_{\tau-1};\theta)\bigg]
$$
and applying the strong Markov property we have
$$
l(y_{1:n};\theta) = \mathbb{E}_{\pi_{\theta}}\bigg[
\mathbb{E}\bigg(\mathbb{I}_{\{y_{1:n}\}\times B_{n+1}}\{Z_{\tau-1},t(Z_{\tau})\}l^c(y_{1:d}^n,Z_{\tau-1};\theta)\bigg|Z_{\alpha}\bigg)\bigg]\;
$$
that is,
\begin{eqnarray}
l(y_{1:n};\theta) & = &
\sum_{\alpha}\int \pi^u_{\theta}(z_{\alpha})
\bigg[\sum_{\tau}\int \bigg\{\prod_{i=\alpha+1}^{\tau}p_{\theta}(z_{i-1},z_i)\bigg\} l^c(y_{1:d}^n,z_{\tau-1};\theta)
\times \nonumber\\ & & 
\mathbb{I}_{\{y_{1:n}\}\times B_{n+1}}\{z_{\tau-1},t(z_{\tau})\}
dz_{\alpha+1:\tau}\bigg] dz_{\alpha}
\label{eq:stoplike}
\end{eqnarray}
where
$$
\pi^u_{\theta}(z_{\alpha}) = \int p_{\theta}^1(z_1)\bigg\{\prod_{i=2}^{\alpha}p_{\theta}(z_{i-1},z_i) \bigg\}\mathbb{I}_{(A^c)^{\alpha-1}\times A}(z_{1:\alpha})dz_{1:\alpha-1}.
$$
The equation (\ref{eq:stoplike}) will be the starting point
for constructing our biased estimates of the likelihood function.

\subsection{Coalescent Model}\label{sec:stopcoal}

Consider the coalescent model.
Specifically, define, for $n>m\geq 3$
the stopping time $\alpha$
$$
\alpha = \inf\{k\geq 1: Z_k\in B_{m+1}\}
$$
which is the first time the forward process has $m+1$ individuals.

Using equation (\ref{eq:stoplike}), we have
\begin{equation}
l(y_{1:n};\theta) = \sum_{\alpha=m}^{\infty}\sum_{\tau=\alpha + n - m}^{\infty}
\int \pi^u_{\theta}(z_{\alpha})
\bigg[\int \bigg\{\prod_{i=\alpha+1}^{\tau}p_{\theta}(z_{i-1},z_i)\bigg\} l^c(y_{1:d}^n,z_{\tau-1};\theta)dz_{\alpha+1:\tau}\bigg]dz_{\alpha}
\label{eq:coallikestp}.
\end{equation}
In words this means that to have $m+1$ chromosomes, we need a minimum of $m$ steps in the
process and $\tau$ has to be at least $n-m+\alpha$ steps. 

In this case, write
$$
\pi^u_{\theta}(z_{\alpha}) = \int \psi_{\theta}(z_1)
\bigg\{\prod_{i=2}^{\alpha}p_{\theta}(z_{i-1},z_i)\bigg\} 
\mathbb{I}_{B_m\times B_{m+1}}(z_{\alpha-1:\alpha})dz_{1:\alpha-1}\;
$$
Note this is well-defined due to the fact that the size of the population
is non-decreasing, and then, for any $\alpha\in\mathcal{K}_m$
$$
\pi^u_{\theta}(z_{\alpha}) = 
\int 
\mathbb{I}_{B_{m+1}}(z_{\alpha})
\pi^e_{\theta}(z_{\alpha-1})p_{\theta}(z_{\alpha-1},z_{\alpha})
dz_{\alpha-1}
$$
where
$$
\pi^e_{\theta}(z_{\alpha-1}) = \frac{m-1+\mu}{m-1}\int p^1_{\theta}(z_1)\bigg\{\prod_{i=2}^{\alpha-1}p_{\theta}(z_{i-1},z_i) \bigg\}p_{\theta}(z_{\alpha-1},u)
\mathbb{I}_{B_m\times B_{m+1}}(z_{\alpha-1},u)dz_{1:\alpha-2}du.
$$
That is, given $\alpha$, the distribution of the chromosome counts at the first entrance time of $B_{m+1}$ can be written as the composition of: 
\begin{itemize}
\item{the distribution of the counts at the last exit time from $B_{m}$}
\item{and the Markov transition.} 
\end{itemize}

Returning to the likelihood (\ref{eq:coallikestp}) and making the substitutions,
$\alpha'=\alpha-1$,
$\eta = \tau - \alpha' +1$, it thus follows that
\begin{eqnarray*}
l(y_{1:n};\theta) & = & 
\sum_{\eta=n-m+2}^{\infty}\sum_{\alpha'=m-1}^{\infty}
\int \pi^e_{\theta}(z_{\alpha'})\mathbb{I}_{B_{m+1}}(z_{\alpha'+1})
\bigg\{\prod_{i=\alpha'+1}^{\eta + \alpha' - 1}
p_{\theta}(z_{i-1},z_i)\bigg\}\times
\\ & & 
\mathbb{I}_{\{y_{1:d}^n\}\times B_{n+1}}(z_{\eta+\alpha'-2:\eta+\alpha'-1})
l^c(y_{1:d}^n,z_{\eta+\alpha'-2};\theta)
dz_{\alpha':\eta+\alpha'-1}.
\end{eqnarray*}
Here $\eta$ is the time from the last time there are $m$ chromosomes
to $n+1$ chromosomes.
Now set
$$
\rho = \inf\{k\geq 1: |X_k|\in B_m\}.
$$
In other words the simulation is stopped the first time there are $m$ chromosomes.
Our approximation of the likelihood is then
$$
l_b(y_{1:n};\theta) =
\sum_{\rho=n-m+2}^{\infty} \int h_{\theta}(x_{\rho})
\bigg\{\prod_{i=2}^{\rho}p_{\theta}(x_{i},x_{i-1})\bigg\}\mathbb{I}_{\{y_{1:d}^n\}\times
B_{n+1}}(x_{2:1})l^c(y_{1:d}^n,x_2;\theta)
dx_{1:\rho}.
$$
On the basis of the above analysis, it is then clear that if 
\begin{equation}
h_{\theta}(x) = \sum_{\alpha=m-1}^{\infty} \int \pi^e_{\theta}(z_{\alpha})\mathbb{I}_{\{z_{\alpha}\}}(x)dz_{\alpha}
\label{eq:lastexit}
\end{equation}
then the approximation of the likelihood is exact. That is, to minimize
the bias an approximation of the true distribution of the counts at the last time there are $m$ chromosomes should be used.
The ideas and notation are clarified in Figure \ref{fig:coalgraph}.

\section{Results on the Bias}\label{sec:bias}

In our biased simulation, using the decomposition \eqref{eq:stoplike}, the procedure will approximate
$$
l_b(y_{1:n};\theta) =
\sum_{\rho}\int h_{\theta}(x_{\rho})
\bigg\{\prod_{i=2}^{\rho}p_{\theta}(x_{i},x_{i-1})\bigg\}\mathbb{I}_{\{y_{1:d}^n\}\times
B_{n+1}}(x_{2:1})l^c(y_{1:d}^n,x_{2};\theta)
dx_{1:\rho}\;
$$
where our notation is such that:
\begin{itemize}
\item{$\big(X_k\big)_{k\geq 1}$ is the time reversed process}
\item{$\rho$ is a first hitting time associated to $\big(X_k\big)_{k\geq 1}$}
\item{$h_{\theta}$ an approximation
of a marginal probability.} 
\end{itemize}

\subsection{Error Bounds}

We begin by giving a simple result on the error bounds for SMC algorithms. The result applies to the standard IS algorithms, for example in De Iorio \& Griffiths (2004b), Stephens \& Donelly (2000),
and for the SMC algorithms as in Chen et al.~(2005). The simulation is to be performed
backward in time, as in Section \ref{sec:likelihood_computation}.
The ideas here are adapted from the theory of Del Moral (2004).

The biased estimates are denoted as $S_b^N(\widetilde{l}_{n,\theta})$, $N\geq 1$, where
$\widetilde{l}_{n,\theta}$ depends upon whether IS or SMC is implemented.
For example, in the IS case:
$$
\widetilde{l}_{n,\theta}(x_{1:\rho},\rho) = 
\frac{h_{\theta}(x_{\rho})
\bigg\{\prod_{i=2}^{\rho}p_{\theta}(x_{i},x_{i-1})\bigg\}\mathbb{I}_{\{y_{1:d}^n\}\times
B_{n+1}}(x_{2:1})l^c(y_{1:d}^n,x_{2};\theta)}{Q_{\theta}(x_{1:\rho})}
$$
where
$$
Q_{\theta}(x_{1:\rho}) = \mathbb{I}_{\{y_{1:d}^n\}\times
B_{n+1}}(x_{2:1})\bigg\{\prod_{i=2}^{\rho-1}q_{\theta}(x_{i},x_{i+1})\bigg\} \mathbb{I}_{(A^{c})^{\rho-1}\times A}(x_{1:\rho})
$$
is such that $A$ is the set associated to $\rho\geq 3$ ($Q_{\theta}-$a.s.),
$x_1,x_2\notin A$
and $\sum_{\rho}\int Q_{\theta}(x_{1:\rho})dx_{1:\rho}=1$.
Below expectations w.r.t the stochastic process that is simulated by the algorithm are written as $\mathbb{E}$ and it is assumed
$$
\|\widetilde{l}_{n,\theta}(x_{1:\rho},\rho)\|_{\infty}<+\infty \quad \forall \theta\in\Theta.
$$ 

\begin{proposition}\label{prop:lpbound}
For
any $n\geq 2$, $p\geq 1$, $\theta\in\Theta$, $y_{1:n}$, there exists a  $B_{p,n}(\theta)<+\infty$ such that:
$$
\mathbb{E}[|S_b^N(\widetilde{l}_{n,\theta})-l(y_{1:n};\theta)|^p]^{1/p} \leq \frac{B_{p,n}(\theta)}{\sqrt{N}}
+ |l_b(y_{1:n};\theta) - l(y_{1:n};\theta)|.
$$
\end{proposition}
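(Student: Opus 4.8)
The plan is to decompose the error into a centred statistical fluctuation about the \emph{biased} target $l_b(y_{1:n};\theta)$ plus the deterministic bias, and then to control the fluctuation by a standard $\mathbb{L}_p$ estimate for the particular sampler being used. By Minkowski's inequality,
\[
\mathbb{E}\big[|S_b^N(\widetilde l_{n,\theta})-l(y_{1:n};\theta)|^p\big]^{1/p}
\le \mathbb{E}\big[|S_b^N(\widetilde l_{n,\theta})-l_b(y_{1:n};\theta)|^p\big]^{1/p}
+ |l_b(y_{1:n};\theta)-l(y_{1:n};\theta)|,
\]
the second term being non-random. So it suffices to find a finite $B_{p,n}(\theta)$ with $\mathbb{E}[|S_b^N(\widetilde l_{n,\theta})-l_b(y_{1:n};\theta)|^p]^{1/p}\le B_{p,n}(\theta)/\sqrt N$. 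The structural fact used throughout is that $S_b^N(\widetilde l_{n,\theta})$ is an unbiased estimator of $l_b(y_{1:n};\theta)$: in the IS case this follows from the change of measure to $Q_{\theta}$ and $\sum_{\rho}\int Q_{\theta}(x_{1:\rho})\,dx_{1:\rho}=1$, reading off $l_b$ from the representation in Section~\ref{sec:stopcoal}; in the SMC case it is the usual unbiasedness of the Feynman--Kac particle estimate of an unnormalised quantity (Del Moral, 2004).

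For the IS estimator, $S_b^N(\widetilde l_{n,\theta})=N^{-1}\sum_{i=1}^N W_i$ where the $W_i=\widetilde l_{n,\theta}(x_{1:\rho^{(i)}}^{(i)},\rho^{(i)})$ are i.i.d.\ with mean $l_b(y_{1:n};\theta)$ and $|W_i|\le\|\widetilde l_{n,\theta}\|_{\infty}<\infty$ by the standing assumption. A standard moment inequality for sums of i.i.d.\ centred bounded variables (Marcinkiewicz--Zygmund for $p\ge 2$; for $1\le p<2$ the inclusion $\|\cdot\|_p\le\|\cdot\|_2$ together with $\mathrm{Var}(W_i)\le\|\widetilde l_{n,\theta}\|_{\infty}^2$) gives $\mathbb{E}[|N^{-1}\sum_i(W_i-\mathbb{E} W_i)|^p]^{1/p}\le C_p\|\widetilde l_{n,\theta}\|_{\infty}/\sqrt N$, so one may take $B_{p,n}(\theta)=C_p\|\widetilde l_{n,\theta}\|_{\infty}$, which is finite and carries the dependence on $n$ and $\theta$ through the weight function.

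For the SMC estimator the samples are dependent through resampling, so instead I would invoke the $\mathbb{L}_p$ inequalities for interacting particle approximations of Feynman--Kac flows (Del Moral, 2004, Ch.~7): for a bounded test function the particle estimate of the flow, and hence --- via the telescoping/martingale decomposition over the resampling times --- of the unnormalised constant $l_b(y_{1:n};\theta)$, fluctuates at rate $N^{-1/2}$ with a constant built from $\|\widetilde l_{n,\theta}\|_{\infty}$ and a product of local quantities over the resampling times; relabelling this constant $B_{p,n}(\theta)<\infty$ yields the same bound. The main obstacle is precisely this step, because $\rho$, the first hitting time of $B_m$ for the time-reversed chain, is a.s.\ finite but unbounded, whereas the Del Moral estimates are stated for a fixed horizon. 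I would handle this either by truncating $\rho$ at a level $K$, applying the fixed-horizon bound to the truncated system and dominating the remainder using the finiteness of $\sum_{\rho>K}\int Q_{\theta}(x_{1:\rho})\,\widetilde l_{n,\theta}(x_{1:\rho},\rho)\,dx_{1:\rho}$ (which tends to $0$ as $K\to\infty$), or by exploiting that for fixed $n$ the genealogical constraint carried by $l^c$ and the indicators restricts the contributing paths enough to control the effective horizon; letting $K\to\infty$ then recovers the stated inequality. Combining the IS/SMC fluctuation bound with the bias term above completes the argument.
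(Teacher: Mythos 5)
Your argument is essentially the paper's own proof: both decompose the error via Minkowski into a fluctuation about $l_b(y_{1:n};\theta)$ plus the deterministic bias, use the Marcinkiewicz--Zygmund inequality with the boundedness of $\widetilde{l}_{n,\theta}$ for the IS case, and appeal to Del Moral's $\mathbb{L}_p$ bounds for Feynman--Kac particle approximations in the SMC case. The only difference is how the random horizon $\rho$ is treated: you propose a truncation argument, whereas the paper sidesteps this by invoking the multi-level (excursion/stopped) Feynman--Kac formulation of Del Moral (2004), Chapter 12, Proposition 12.2.3, which is stated directly for such stopped models; either route is acceptable.
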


\noindent\textbf{Remark}.\emph{
The result shows the standard variance-bias type decomposition. 
That is, $B_{p,n}(\theta)/\sqrt{N}$ can be thought of as a bound on the variance and 
$|l_b(y_{1:n};\theta) - l(y_{1:n};\theta)|$ is the bias.
Our estimate converges to $l_b(y_{1:n};\theta)$, and it is sought to control the
bias term, which, in our case can be approximately written in the form
\begin{equation}
|l_b(y_{1:n};\theta) - l(y_{1:n};\theta)|
= |[\lambda_1-\lambda_2](P_{1:k}(f))| \label{eq:biasdecomp}
\end{equation}
for $\lambda_1,\lambda_2$ two probability measures and $\{P_{n}\}$ a sequence of non-homogenous
Markov kernels ($\theta$ is suppressed on the R.H.S).
}

\subsection{Controlling the Bias}\label{sec:contrbias}

A simple technical result is now given which shows how to control
the bias term (\ref{eq:biasdecomp}). 

Some assumptions are now made, that can be
satisfied by many stochastic tree models. Introduce a sequence of time inhomogeneous Markov kernels $\{P_n\}$,
on space $(R,\mathscr{R})$ and a sequence of sets $\big(\{C_n:n\geq
0, C_n\in\mathscr{R}\}\big)_{n\geq 0}$.

\begin{hypA} \label{hyp:p_nassump}
{\em Stability of $\{P_n\}$.}
\renewcommand{\labelitemii}{}
\begin{itemize}
\item{(i) \bf Initial Probability Measures}. $\lambda_1,\lambda_2$ are concentrated
on $C_0$.
\item{(ii) \bf Absorption of $\{P_n\}$}. For every $n\geq 1$, $x\in C_{n-1}$ we have
\begin{equation}
P_n(x,C_{n})=1\label{eq:absorbcond}.
\end{equation}
\item{(iii) \bf Local Mixing of $\{P_n\}$}. For every $n\geq 1$, there exist $\epsilon_n\in(0,1)$,
$\nu_n$ concentrated on $C_{n}$, such that for all $x\in C_{n-1}$
\begin{equation}
\epsilon_n \nu_n(\cdot) \leq P_n(x,\cdot) \leq \frac{1}{\epsilon_n}\nu_n(\cdot)
\label{eq:mixingcond}.
\end{equation}
\end{itemize}
\end{hypA}

The assumption (A\ref{hyp:p_nassump}) (which is comprised of (i)-(iii)) will refer to the 
fast mixing of the process close to the top of the tree.
The absorption type assumption refers to the birth process associated
to coalescent type chains.

\begin{proposition}\label{prop:biascontrol}
Assume (A\ref{hyp:p_nassump}). Then, for any $k\geq 1$, define:
$$
\vartheta_k := \frac{2}{\epsilon_1^2\log 3}\prod_{i=2}^k\frac{1-\epsilon_i^2}{1+\epsilon_i^2}
$$
and we have
$$
\|[\lambda_1-\lambda_2]P_{1:k}\|_{tv} \leq \vartheta_k\|\lambda_1-\lambda_2\|_{tv}.
$$
\end{proposition}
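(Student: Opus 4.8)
The plan is to transfer the problem to the Hilbert projective metric $h$ on the cone of finite positive measures on $(R,\mathscr{R})$, where the two--sided bound (A\ref{hyp:p_nassump})(iii) produces a genuine strict contraction, and only then to convert back to total variation. Recall that for $\mu\sim\nu$ one has $h(\mu,\nu)=\log\big(\sup_A \mu(A)/\nu(A)\big)-\log\big(\inf_A \mu(A)/\nu(A)\big)$, that $h(\mu P,\nu P)\le h(\mu,\nu)$ for every Markov kernel $P$, and I will use two standard facts (see e.g.\ Del Moral (2004)): \emph{(a)} if a Markov kernel $M$ satisfies $\epsilon\,m(\cdot)\le M(x,\cdot)\le\epsilon^{-1}m(\cdot)$ for all $x$ in a set carrying the measures at hand, then its (restricted) projective diameter is at most $4\log(1/\epsilon)$, so by Birkhoff's theorem $h(\mu M,\nu M)\le\tanh\!\big(\log(1/\epsilon)\big)\,h(\mu,\nu)=\tfrac{1-\epsilon^{2}}{1+\epsilon^{2}}\,h(\mu,\nu)$; and \emph{(b)} for probability measures $\|\mu-\nu\|_{tv}\le\tfrac{2}{\log 3}\,h(\mu,\nu)$.

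First I would treat the initial step. By (A\ref{hyp:p_nassump})(i) the measures $\lambda_1,\lambda_2$ are carried by $C_0$, so by (A\ref{hyp:p_nassump})(iii) with $n=1$ any probability measure on $C_0$ is sent by $P_1$ to a measure whose density with respect to $\nu_1$ lies in $[\epsilon_1,\epsilon_1^{-1}]$. Writing $\lambda_1=(1-a)r+as_1$, $\lambda_2=(1-a)r+as_2$ with $a=\|\lambda_1-\lambda_2\|_{tv}$ and $r,s_1,s_2$ probability measures on $C_0$, a short optimisation over the admissible densities gives $\tfrac{d(\lambda_1P_1)}{d(\lambda_2P_1)}\in\big[(1-a)+a\epsilon_1^{2},\,(1-a)+a\epsilon_1^{-2}\big]$, hence
$$
h(\lambda_1P_1,\lambda_2P_1)\;\le\;\log\frac{(1-a)+a\epsilon_1^{-2}}{(1-a)+a\epsilon_1^{2}}\;\le\;\frac{a}{\epsilon_1^{2}}\;=\;\frac{\|\lambda_1-\lambda_2\|_{tv}}{\epsilon_1^{2}},
$$
the second of these inequalities following from the elementary fact that $a\mapsto a\epsilon_1^{-2}-\log\tfrac{(1-a)+a\epsilon_1^{-2}}{(1-a)+a\epsilon_1^{2}}$ vanishes at $a=0$ and is nondecreasing on $[0,1]$.

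Next I would iterate the contraction. By (A\ref{hyp:p_nassump})(ii) the iterate $\lambda_iP_{1:n}$ is concentrated on $C_n$ for every $n\ge 0$, so the minorisation (A\ref{hyp:p_nassump})(iii) for $P_{n+1}$ is in force along the whole trajectory; applying fact \emph{(a)} to $P_2,\dots,P_k$ successively yields
$$
h(\lambda_1P_{1:k},\lambda_2P_{1:k})\;\le\;\Big(\prod_{i=2}^{k}\frac{1-\epsilon_i^{2}}{1+\epsilon_i^{2}}\Big)\,h(\lambda_1P_1,\lambda_2P_1).
$$
Combining this with fact \emph{(b)} applied to the probability measures $\lambda_1P_{1:k},\lambda_2P_{1:k}$ and with the first--step bound gives
$$
\|[\lambda_1-\lambda_2]P_{1:k}\|_{tv}\;\le\;\frac{2}{\log 3}\Big(\prod_{i=2}^{k}\frac{1-\epsilon_i^{2}}{1+\epsilon_i^{2}}\Big)\frac{\|\lambda_1-\lambda_2\|_{tv}}{\epsilon_1^{2}}\;=\;\vartheta_k\,\|\lambda_1-\lambda_2\|_{tv},
$$
which is the claim; for $k=1$ the empty product equals $1$ and the bound follows at once from fact \emph{(b)} and the first--step estimate.

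The main obstacle is precisely the first step. Before any kernel has acted, $\lambda_1$ and $\lambda_2$ are arbitrary probability measures on $C_0$, so $h(\lambda_1,\lambda_2)$ can be infinite and one cannot start the Birkhoff recursion at the pair $(\lambda_1,\lambda_2)$ directly; the role of (A\ref{hyp:p_nassump})(i) together with the $n=1$ instance of (A\ref{hyp:p_nassump})(iii) is exactly to push the pair into the regime where $h$ is finite and dominated by total variation, and extracting the clean constant $\epsilon_1^{-2}$ there (rather than a cruder $\log(1/\epsilon_1)$--type bound) requires the monotonicity estimate above. A minor but essential point throughout is the bookkeeping ensured by the absorption condition (A\ref{hyp:p_nassump})(ii): it guarantees that every iterate stays inside the relevant $C_n$, which is what makes the local minorisation (A\ref{hyp:p_nassump})(iii) usable at each stage.
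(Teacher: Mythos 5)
Your proof is correct and follows essentially the same route as the paper's: both arguments pass to the Hilbert projective metric, use the local mixing condition (A1)(iii) together with the absorption condition (A1)(ii) to obtain the two-sided Radon--Nikodym bounds (yielding the $\epsilon_1^{-2}$ factor and the Birkhoff coefficients $\tfrac{1-\epsilon_i^2}{1+\epsilon_i^2}$), and then convert back to total variation via the $\tfrac{2}{\log 3}$ comparison. The only difference is that the paper delegates the contraction and comparison steps to Lemma 3.9 and Theorem 4.1 of Le Gland \& Oudjane (2004), whereas you reprove them explicitly.
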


\noindent\textbf{Remark 1}.\emph{
The result helps to bound the bias as
$$
|[\lambda_1-\lambda_2](P_{1:k}(f))| \leq \|f\|_{\infty}\|[\lambda_1-\lambda_2]P_{1:k}\|_{tv}.
$$
Essentially, the fast mixing of $P_n$ within the domain it is constrained to allow the composition of kernels to forget its initial distribution
at an exponential rate.
In addition, as in Olsson et al.~(2008), assuming $\epsilon_n$ is uniform
in $n$, the benefits of stopping, in terms of variance/bias trade off can be
substantial.
}

\noindent\textbf{Remark 2}.\emph{
One point of interest in the sequel is that, if the mixing condition
(A\ref{hyp:p_nassump}) does not hold, it is possible to establish
a similar bound when the initial measures $\lambda_1$ and $\lambda_2$ are similar.
That is to say, when $\lambda_1\sim\lambda_2$ and $\exists \epsilon\in(0,1)$ such that
$$
\epsilon\leq \frac{d\lambda_1}{d\lambda_2} \leq \frac{1}{\epsilon}.
$$
This is unsurprising as it implies that if the kernels do not mix, we need to `match'
$\lambda_1$ and $\lambda_2$ for the bias to be small.
}

\subsection{Verifying the Assumptions}

(A\ref{hyp:p_nassump}) is now discussed in the context of the coalescent. Note
that the results follow, with some extra work, for coalescent processes
with migration. Readers interested in how the method may be applied can skip to Section \ref{sec:simos}, with no loss in continuity.

Suppose that the transition matrix $P$ satisfies,
for any $i,j\in\{1,\dots,d\}$, $\epsilon_{\varphi}\in (0,1)$
and probability $\varphi$, $\varphi_j>0$
$$
\epsilon_{\varphi}\varphi_j \leq p_{ij} \leq
\epsilon_{\varphi}^{-1}\varphi_j.
$$
This condition implies that $P$ mixes extremely quickly.
Let $C_0=\{z_{1:d}:|z_{1:d}|_1 = 3\}$; this corresponds to
the space of $\lambda_1=\pi_{\theta}^e$. Also let $C_1=\{z_{1:d}:3\leq |z_{1:d}|_1\leq
6\}$.
It is clear that $p_{\theta}^3(x,C_1)=1$: since we start with at most 3
chromosomes and the most possible after 3 steps is 6. Now
it can be seen that, for any $z\in C_0$
$$
p_{\theta}^3(z,\cdot) \geq \epsilon_1 \varphi^{\otimes 3}(\cdot)
$$
and 
$$
p_{\theta}^3(z,\cdot) \leq \epsilon_1^{-1} \varphi^{\otimes 3}(\cdot)
$$
with
$$
\epsilon_1 = 6\bigg[3\epsilon_\varphi\big[\inf_{i\in\{1,\dots,d\}}\varphi(i)\big]\bigg(\frac{\mu}{\mu+2}\bigg)^2\bigg]^3.
$$
Here the minorising probability $\nu=\varphi^{\otimes 3}$ puts all its probability
on having 3 chromosomes. Then it can be subsequently seen that $p_{\theta}^6$
satisfies condition (\ref{eq:mixingcond}), with $C_2=\{z_{1:d}:3\leq |z_{1:d}|_1\leq 12, n_i\geq 0\}$ and so fourth. In effect 
the condition (\ref{eq:mixingcond}) holds with $\epsilon_n\rightarrow 0$; that
is, the closer to the top of the tree we stop, the faster the process
will mix forward in time.

As a result, to bound the bias we can write it, approximately, in
the form, for $r>n-m+1$
$$
M(r)\|\lambda_1-\lambda_2\|_{tv}\bigg[\sum_{k=n}^{r}\xi^{k/l} + R(r)\bigg]
$$
with $\lambda_1$ as in (\ref{eq:lastexit}), $\lambda_2=h_{\theta}$,
$M(r), R(r)\in(0,\infty)$,
$l\in\{1,\dots,k\}$ is associated to the fact that we need to iterate the
kernels to satisfy (\ref{eq:mixingcond}) and $\xi\in(0,1)$.
$r$ is an integer big enough (say $100000$) where we suspect that the possibility
of generating a tree of length $n$ and hitting the data is extremely small,
so we can neglect the upper term.
Thus, approximately, the bound shows that the bias falls geometrically as we stop closer to the top of the tree. Note, however, it cannot go to zero unless $\lambda_1$ and $\lambda_2$ are equal. To an extent, finding good approximations is more difficult than being able to stop the tree, which is why we focus on this.

\noindent\textbf{Remark 1}.\emph{
The result given here mirrors one
proved by Donelly \& Kurtz (1999) for Fleming-Viot models. In Theorem 9.4
of that paper they show that the particle process is uniformly ergodic, if
the mutation process is. This is very similar to the property established
above. }

\noindent\textbf{Remark 2}.\emph{The information, in terms of when to stop
the simulation, that is contained in the bound on the bias is as follows.
If the mutation process mixes quickly, as above, then the bias falls at a
geometric rate: we should stop the simulation when the process starts
to mutate many times. This could be measured in terms of the effective sample
size (e.g.~Liu (2001)), if trees are simulated in parallel, or alternatively, if $\frac{\mu}{M}\geq n$,
for $M\in\mathbb{Z}^+$ a large multiple of the current size of the tree.}

\noindent\textbf{Remark 3}. \emph{In terms of the expression
$\|\lambda_1-\lambda_2\|_{tv}$, one could adopt a parent-independent mutation (PIM) marginal.
If we have
$$
\sup_{x}\|p_{\theta}(x,\cdot) - p_{\theta,m}(x,\cdot)\|_{tv} \leq \sup_{i,j}|p_{ij}-\phi_j|
= \varrho
$$
where $p_{\theta,m}(x,\cdot)$ is the transition for the PIM, and the mutation
vector is $\phi_j$, then ideas from perturbed Markov chains (e.g.~Mitrophanov (2005)) can be adopted to determine a quantitative bound. We are currently investigating a meaningful bound. 
}

\section{Simulations}\label{sec:simos}

\subsection{Experiment Set-Up}

To illustrate our approach, we consider three simulation scenarios: two
PIM models and one parent dependent
mutation model (PDM). The two PIM models, denoted PIM 0.5-0.5 and PIM
0.1-0.9 are based on the following per-locus transition matrices:
$$
\begin{pmatrix}
0.5 &0.5\\
0.5 &0.5
\end{pmatrix}   
{\mbox{, and }}
\begin{pmatrix}
0.1 &0.9\\
0.1 &0.9
\end{pmatrix}   
{\mbox{ respectively,}}
$$
while the per-locus mutation probability matrix underlying the PDM
model is
$$
\begin{pmatrix}
0.5 &0.5\\
0.1 &0.9
\end{pmatrix}   
.
$$
In all three scenarios, the initial population was set to 100
sequences and we considered a single-locus case ($i.e.$ with 2
possible types). For the PDM model only, we also considered the case
of 10 loci ($i.e.$ $2^{10}$=1024 different types). Irrespective of the number of loci considered, the distribution of the 100 initial sequences among
the different types was sampled from a multinomial distribution with a
probability vector $P$ defined as the invariant point, solution of
equation, $P_{(2^n\times 1)}=P_{(2^n\times 1)}T_{(2^n\times 2^n)}$, 
where $n$ denotes the number of loci considered and $T$ is the full
mutation probability matrix. 

The algorithm description is given in Appendix 2. For the function $h_\theta$,
we use the distribution of an un-ordered sample from a PIM model (which, even for the PIM cases, is not the correct distribution in the bias term).

\subsection{Simulation Results}

Simulations were carried out until there
were $TM=1, 2, 5, 10, 25, 50$ sequences left in the population. $TM=1$
exactly corresponds the approach in Stephens \& Donelly (2000) and is
subsequently referred to as $SD$. For each simulation, we examined
60 values for $\mu$ ranging from 0.1
to 30.1. We report in Figure \ref{graphe_distrib} the estimated
log-likelihood distribution, based on 100,000 samples for all four
simulations scenarios (presented in lines) and three values of $\mu$
(presented in columns). 
\begin{figure}
\begin{center}
\includegraphics[width=\linewidth]{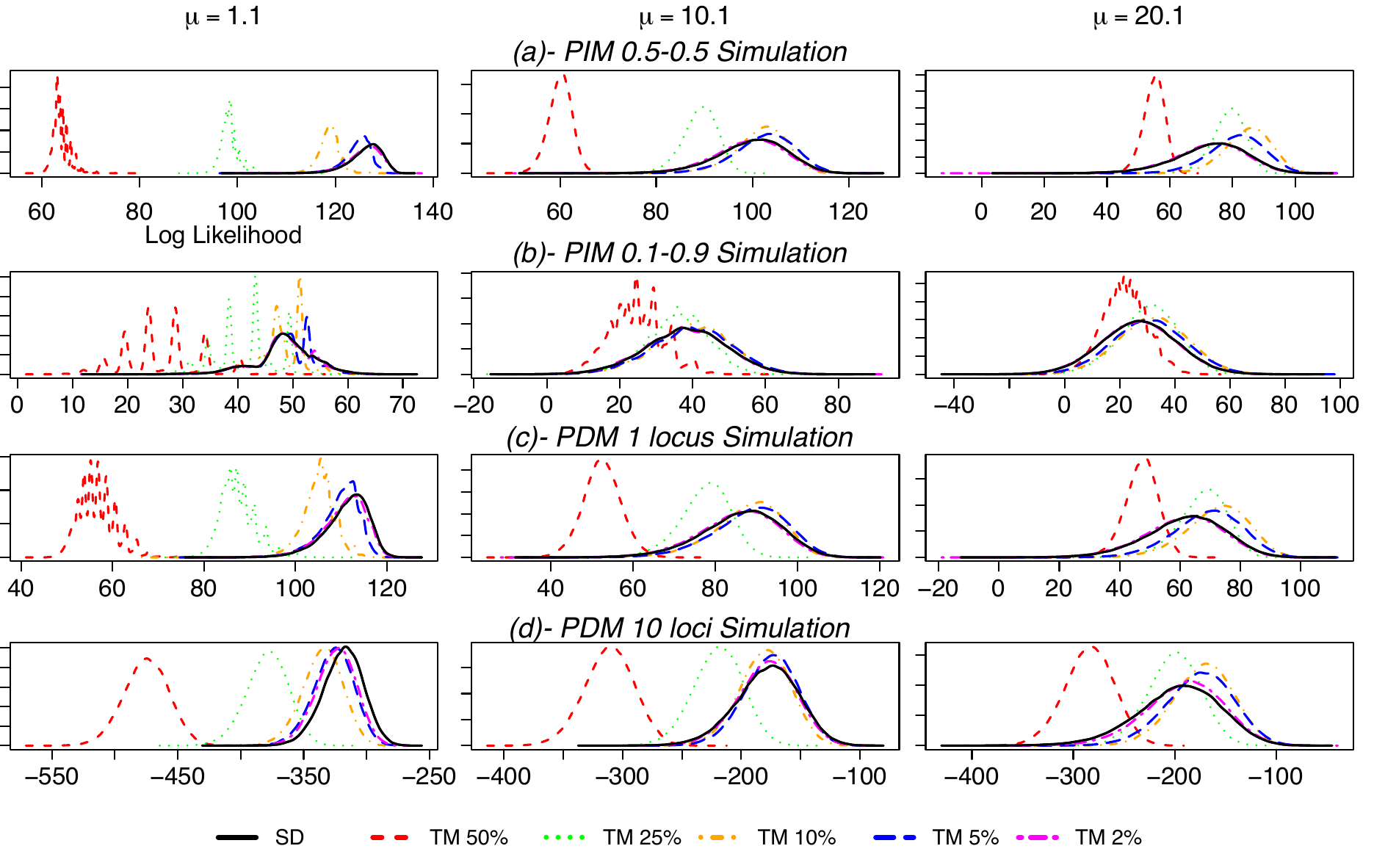}
\caption{\label{graphe_distrib} Estimated distribution of the
    likelihood for the four simulation scenarios and for three values
    the mutation rate $\mu=0.1, 10.1$ and $10.1$. In each model,
    results are presented for the six stopping times in the simulation
    of the genealogical tree. Plots are
  based on 100,000 samples. }
\end{center}
\end{figure}

In Figure \ref{graphe_distrib} it is clear that as expected, uniformly across the values of $\mu$, the
closer to the MRCA the algorithm is stopped, the more accurate the
distribution of the likelihood is estimated. However, up to
TM 10\% (and even TM 25\% for $\mu>20$) our results suggest
that the time machine approximation and correction
provides an accurate estimate of the distribution of the
likelihood. Conversely, when the algorithm is stopped too early
(TM $\geq$ 25\%) the biased estimator underlying the time machine
approach leads to very inaccurate estimates of the likelihood.
For even more extreme cases
(TM 50\% for $\mu=0.1$), this results in a highly shifted
estimated distribution of the likelihood. 

The above observations are also reflected in the mean likelihood (Figure \ref{meanL}). For every model considered here, the
simulations of the time machine up to TM 25\% seem to provide
estimates of the mean likelihood that are similar to
the SD approach, although for larger values of $\mu$, TM 25\%
seems to overestimate the mean likelihood. Furthermore, the time
machine approach seems to accurately locate the value of $\mu$
maximizing the likelihood for TM$\leq$ 10\%, and to provide acceptable approximations for for this when TM$>$ 10\%, regardless of the
simulation scenario.

\begin{figure}
\begin{center}
\includegraphics[width=\linewidth]{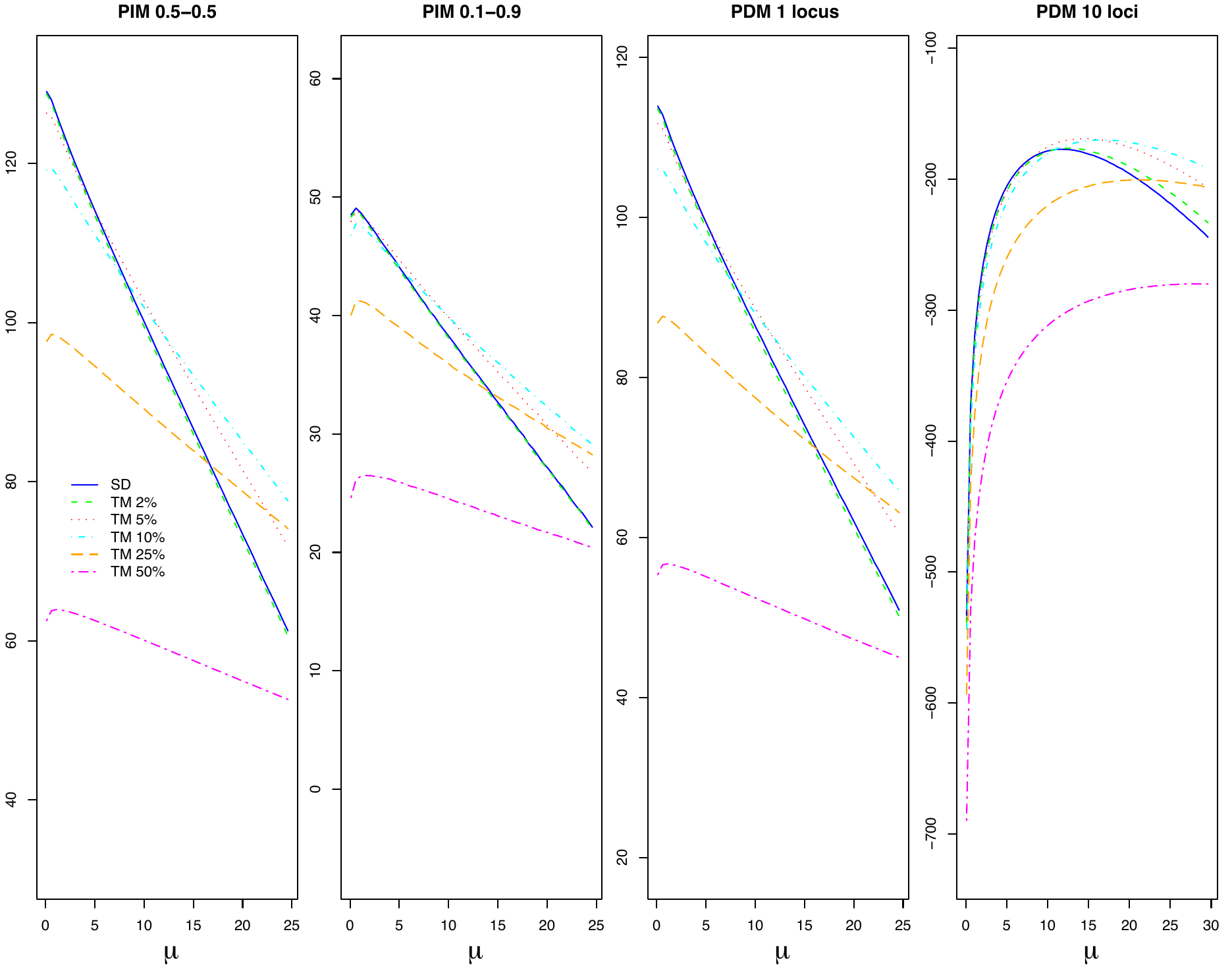}
\caption{\label{meanL} Estimated likelihood for the four
  simulation scenarios as a function of the mutation rate $\mu$. Plots are based on  100,000 samples.}
\end{center}
\end{figure}

In Figure \ref{meanT}, the average computation time per iteration is
plotted as a function of $\mu$ for the PDM-10 loci
simulations. Results for all other models led to the same conclusions and are therefore not shown.
From this figure, the computation time appears to be a linearly
increasing function of $\mu$: increasing the mutation rate naturally
decreases the probability of simulating a coalescent event and
therefore tends to increase the time 
to reach the MRCA (or any population size). However, it seems that
stopping the simulation when there are only more than 5 sequences left
in the population drastically reduces the computation time: for
TM 5\% the simulation run is on average more than twice as fast as
the SD simulation, and for TM 25\%, the time machine is more than
3 times more time-efficient than the SD algorithm.  It should also be noted that `large' values of $\mu$ (around 10), for which the time savings are most significant, also seem to be inferentially important (see the fourth panel of Figure \ref{meanL}).

In Figure \ref{fig:relative_sd} the relative standard deviation across our 100 repeats of the algorithm, of the time machine to SD are plotted for all the scenarios considered. It can be seen, as expected, that there is some variance reduction and, for example for the TM 5\% PDM, the variance reduction is of the order 1.5.

\begin{figure}
\begin{center}
\includegraphics[width=\linewidth]{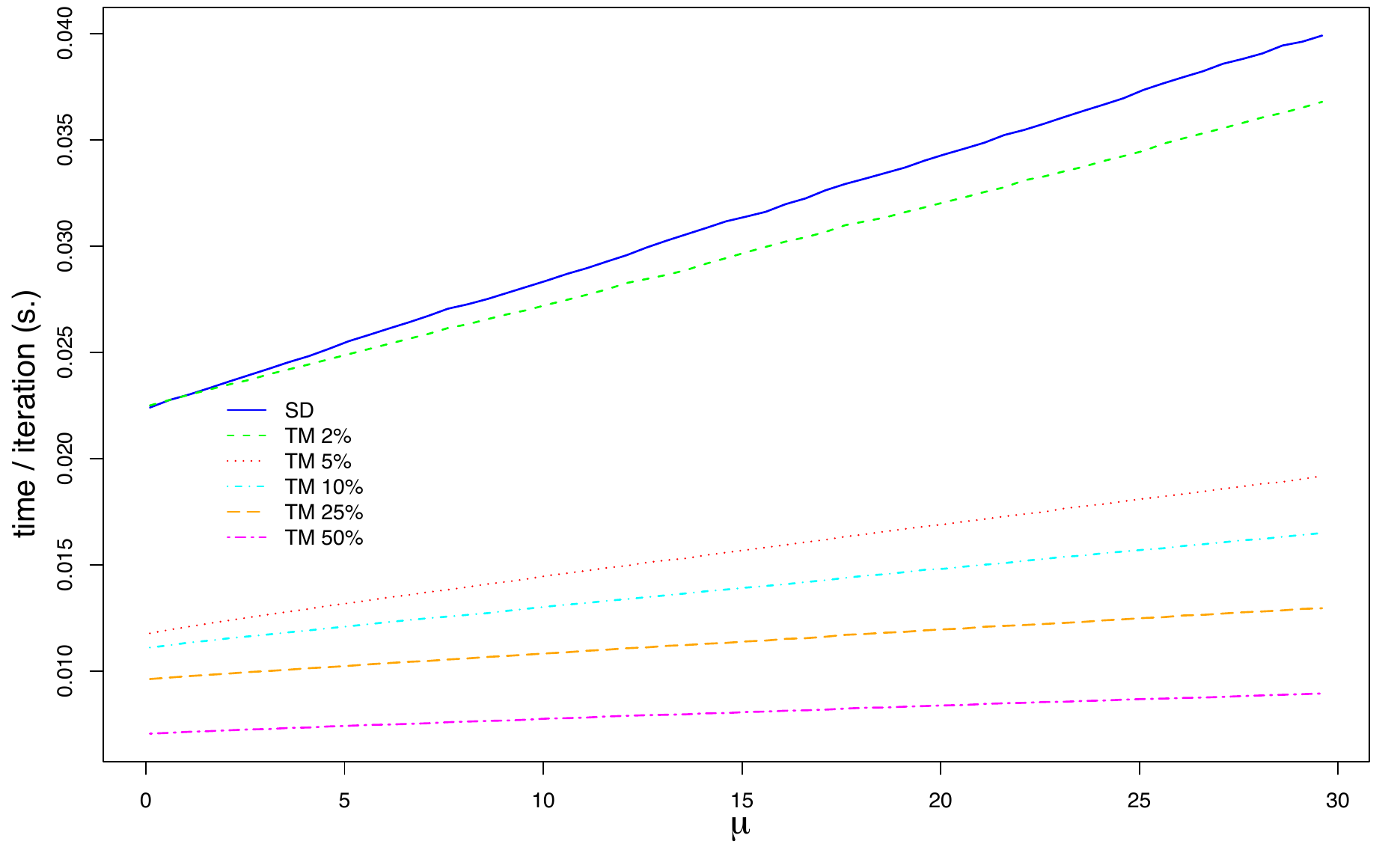}
\caption{\label{meanT} Average computation time as a function
  of the mutation rate $\mu$.  Figures are
  based on 100,000 samples.}
\end{center}
\end{figure}

\begin{figure}
\begin{center}
\includegraphics[width=\linewidth]{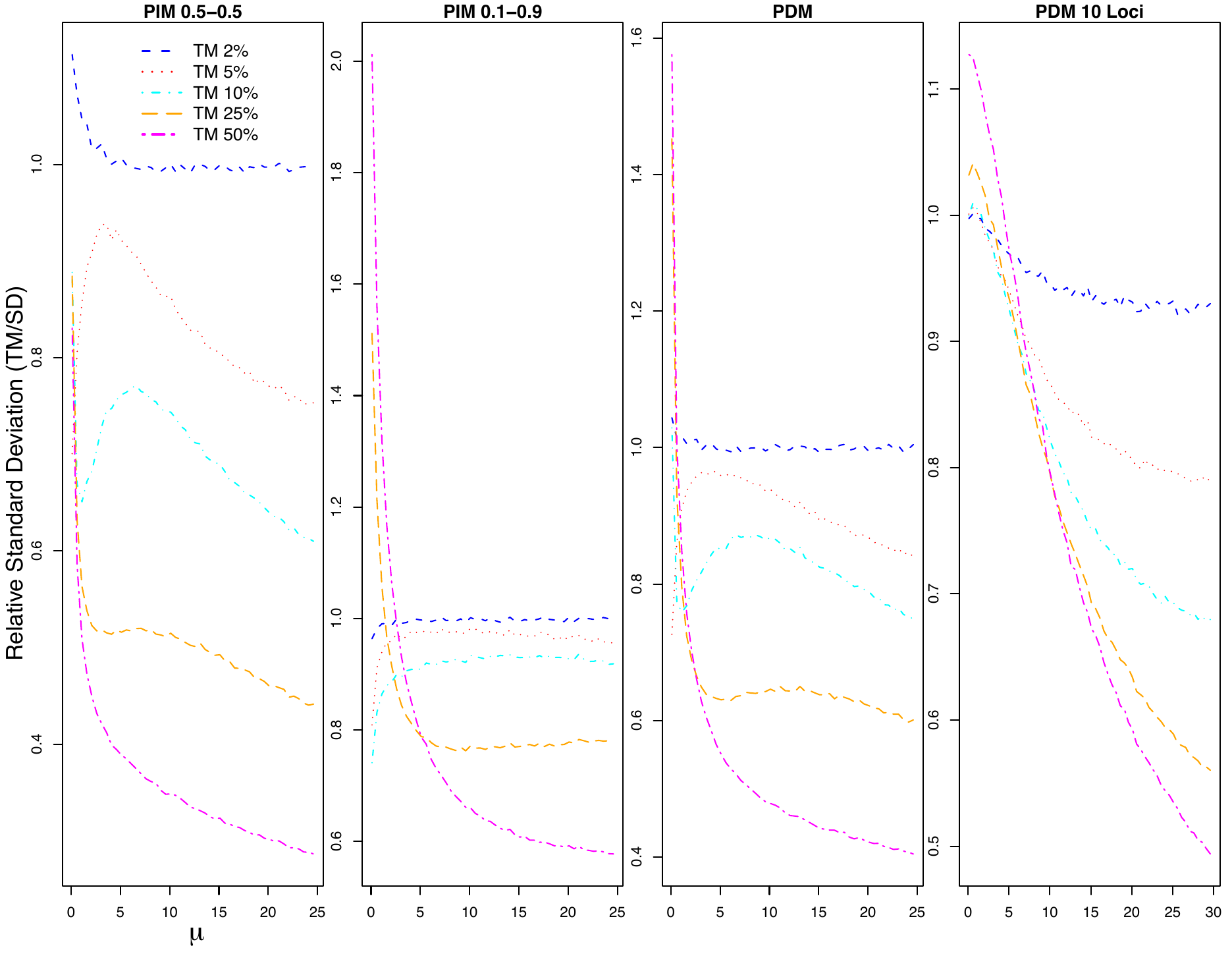}
\caption{\label{fig:relative_sd} Relative standard deviation across 100 repeats of the time machine to SD. Figures are
  based on 100,000 samples.}
\end{center}
\end{figure}

On the basis of our experiments, combining both computational efficiency and the numerical accuracy, the use of the
time machine with TM 5\% is an efficient
alternative to the SD algorithm. The C++ code is available upon request from the third author.

\section{Summary}\label{sec:summary}

In this paper we have considered a new approach for simulation
of stochastic trees and likelihood calculation of sample probabilities in population genetics models. The approach consists in stopping the backward simulations before the top of the tree is reached. We have provided theoretical results on the bias introduced in the estimation of the likelihood. Some extensions to our work are described below.

Firstly, to extend our analysis to different models. The paper has been written
to facilitate such analysis and we believe it is rather simple to deal with other stochastic tree models. Also, some further empirical investigations would help support the simulations and theoretical analyses presented here.
Our methodology would be further enhanced with GPU technology (e.g.~Lee et al.~(2010)), and this is
one area that we are currently investigating.

Secondly, to look at the consistency (in a likelihood sense) of our biased
Monte Carlo estimator. As we observed in Section \ref{sec:simos}, it appears that the Time Machine seems to recover the maximum likelihood estimator.
Therefore consistency, or potential asymptotic bias is of genuine interest.
There are very few results in the context of consistency, due to
the dependency in the data, after integrating out the tree. That is, it is
difficult to apply uniform laws of large numbers to complex dependency structures.
None-the-less, we
suggest the work of Douc et el.~(2004), Fearnhead (2003), Olsson et al.~(2008),
Olsson \& Ryd\'en (2008) as possible starting points for a proof.

Thirdly, the time machine can be used in the context of Markov chain Monte Carlo (MCMC). If one is interested in Bayesian parameter inference, then a stopping-time SMC algorithm can be used within an MCMC algorithm (particle
MCMC (Andrieu et al.~2010)). Significant time savings per iteration can be gained by using the time machine; see Jasra \& Kantas (2010) for some details.

\section*{Acknowledgements}

We thank Prof.~Arnaud Doucet for some valuable conversations related to this work.

\section*{Appendix 1: Proofs}

We give the proofs of Propositions \ref{prop:lpbound} and 
\ref{prop:biascontrol}.

\begin{proof}[Proof of Proposition \ref{prop:lpbound}]
In the case of IS, the result follows by 
adding and subtracting $l_b(y_{1:n};\theta)$
applying Minkoswki and the Marincinkiewicz-Zygmund inequality. In the case of
the SMC algorithm the proof follows from the fact that the algorithm approximates
a multi-level Feynman-Kac formula; see Chapter 12, Proposition 12.2.3 Del Moral (2004).
Note that this point is apparently over-looked in Chen et al.~(2005), and such a
result helps to verify the convergence of the algorithm. In addition, note
that the Proposition 12.2.3 of Del Moral (2004)
does not depend on the importance weights being upper-bounded by 1.
Hence, due to the boundedness
of the weights, the same proof as for IS applies, except the $\mathbb{L}_p$
bound for particle approximations of Feynman-Kac formulae is used instead
of the Marincinkiewicz-Zygmund inequality.
\end{proof}

\begin{proof}[Proof of Proposition \ref{prop:biascontrol}]
The proof is fairly simple and combines the proof of Lemma 3.9 and Theorem
4.1 of Le Gland \& Oudjane (2004) (see also Theorem 3.1 of Tadi\'c \& Doucet (2005)). The idea is to use the contraction
property of the total variation distance and Hilbert metric, as well as the
relation between the two (see Lemma 6.1 of Tadi\'c \& Doucet (2005)).

The only real complication is using the local mixing condition (\ref{eq:mixingcond})
to derive a bound on the Radon-Nikodym derivatives
$$
\frac{d(\lambda_1P_{1:k})}{d(\lambda_2P_{1:k})}, \frac{d(\lambda_2P_{1:k})}{d(\lambda_1P_{1:k})}.
$$
Consider $\lambda_1P_{1:k}$, clearly
$$
\lambda_1P_{1:k-1}(\mathbb{I}_{C_{k-1}}P_k(f)) \leq \frac{1}{\epsilon_k}\nu_k(f).
$$
In addition
$$
\lambda_1P_{1:k-1}(\mathbb{I}_{C_{k-1}^c}P_k(f)) = 0.
$$
Since 
$$
\lambda_1P_{1:k}(f) \geq \epsilon_k\nu_k(f)
$$
it follows that
$$
\frac{d(\lambda_1P_{1:k})}{d(\lambda_2P_{1:k})} \leq \frac{1}{\epsilon_k^2}.
$$
The proof can then be concluded by following the arguments of Le Gland \& Oudjane (2004), Lemma 3.9 and Theorem 4.1.
\end{proof}

\section*{Appendix 2: Algorithm Description}
Let $x_t=(x_{1,t},\dots,x_{d,t})$, $t\in \mathbb{Z}^+$, the population
size within each of the $d$ states at time $t$. The algorithm will
simulate backward in time genealogical trees for an initial population,
$x_1$ the ($d-$dimensional) counts associated to the observed data,
until there are $N_{TM}$ sequences left in the population. The case
where $N_{TM}$=$1$ corresponds to ordinary coalescent and $N_{TM}>1$
to the time machine. Most of the notations can be found in Sections
\ref{sec:motivation} and \ref{sec:id_interest}.

\subsection*{Iterative algorithm}
For any generation $t$, there are $|x_t|_1$ sequences left in the
population, the following steps will be iterated until $|x_t|_1$=$N_{TM}$:
\begin{enumerate}
\item Sampling the type of the offspring sequence ($i$) with
  probability
$$
\frac{x_{i,t}}{|x_t|_{1}}.
$$
\item Getting the type of the ancestor sequence ($j$).\\
A sequence of a given type $i$ can have arisen from an ancestor
sequence of type $j$ through:
\begin{enumerate}
\item a coalescent event with a probability proportional to 
$$
|x_{i,t}| - 1.
$$
\item a $j$ to $i$ mutation event (inclusive of self mutations, from type $i$ to type $i$), with probability proportional to:
\begin{equation}
\mu \kappa_{ij} p_{ji},
\label{P_mut}
\end{equation}
where
\begin{equation*}
  \kappa_{ij}=\left\{
    \begin{array}{cl}
      {\displaystyle{\frac{x_{j,t}+\mu \psi_{j}}{|x_t|_1 - 1 +
            \mu}}}&\mbox{if $j\neq i$}\\
\\
{\displaystyle{\frac{x_{j,t}-1+\mu \psi_{j}}{|x_t|_1 - 1 + \mu}}}&\mbox{if $i$=$j$}
   \end{array}\right.,
\end{equation*}
\end{enumerate}
\item Updating the population sizes within each type.\\
\begin{equation*}
  x_{t+1}=\left\{
    \begin{array}{cl}
 x_t - e_i + e_j      &\mbox{if a mutation has occurred }\\
\\
 x_t - e_i&\mbox{if a coalescent event was simulated}
   \end{array}\right.,
\end{equation*}
\item Calculate the contribution to the likelihood of the simulated
  event (suppressing the subscript $\theta$)\\
\begin{equation}
 w_t=\left\{
    \begin{array}{cl}
{\displaystyle{\frac{K_1}{K_2} \frac{\kappa_{ii}}{\kappa_{ij}} \frac{x_{j,t+1}}{|x_{t}|_1}}}      &\mbox{if a mutation has occurred }\\
\\
{\displaystyle{\frac{K_1}{K_2} \frac{1}{\kappa_{ii}} \frac{x_{i,t+1}(|x_{t+1}|_1-1)}{x_{i,t}(x_{i,t}-1)}}}&\mbox{if a coalescent event was simulated}
   \end{array}\right.,
   \label{W_t_mut_bis}
\end{equation}
where
$$
K_1=|x_t|_1  (|x_t|_1-1+\mu),
$$
and
$$
K_2=|x_{t+1}|_1 (|x_{t+1}|_1-1+\mu).
$$
\item Updating the log likelihood\\
\begin{equation}
 W_{t}=\left\{
    \begin{array}{cl}
\log(w_t)      &\mbox{if $t$=0}\\
\\
W_{t-1}+\log(w_t)&\mbox{it $t\geq 1$}
   \end{array}\right.
\label{update_L}
\end{equation}
\item Assessing the stopping criterion.\\
  When the time machine is used ($i.e.$ $N_{TM}>1$), steps 1 to 5 are
  repeated until $|x_{t+1}|_{1}>N_{TM}$. Otherwise, when the full tree
  is simulated ($N_{TM}$=$1$), steps 1 to 5 are repeated until there are 2
  sequences left in the population. Then, mutations are simulated
  until both remaining sequences are of the same type, based on the
  following three steps:
\begin{enumerate}
\item Choose one of the two sequence, of type $i$, with probability 0.5.
\item Simulate the mutation event from an ancestor of type $j$ (to
  type $i$) according to the probability defined in equation
  (\ref{P_mut}), and setting the coalescent probability to 0.
\item Calculate the corresponding weight for the sampled $j$ to $i$
  simulated transition. At this final stage there are only two
  individuals in the population ($|x_t|_1=|x_{t+1}|_1=2$), hence
  $K_1=K_2$, and $x_{i,t}=1$, then:
$$
 \frac{\kappa_{ii}}{\kappa_{ij}} =\frac{\mu \psi_i}{x_{j,t}+\mu \psi_j}.
$$
When the final generation is reached $x_{j,t+1}=2$, and $x_{j,t+1}=1$
for any other iterations in that step. The weight for each generation,
derived from \eqref{W_t_mut_bis} is then defined as:
\begin{equation*}
w_t=\left\{
\begin{array}{cl}
{\displaystyle{\frac{\mu \psi_i}{x_{j,t}+\mu \psi_j}}}&\mbox{at the last generation}\\
{\displaystyle{\frac{\mu \psi_i}{2(x_{j,t}+\mu \psi_j)}}}&\mbox{otherwise}
\end{array}\right.
\end{equation*}
\item Update the likelihood according to equation (\ref{update_L})
\end{enumerate}
\end{enumerate}

\subsection*{Estimating the bias}
This step is specific to the time machine (i.e.~if $N_{TM}\geq
1$). Recalling that $\rho$ is the generation at which the iterative algorithm
was stopped, the bias induced by stopping the simulation before
reaching the MRCA is estimated as:
\begin{equation*}
  \log(b)=\log\left(\frac{(|x_{\rho}|_1)!\Gamma(\mu)}{\Gamma(\mu+|x_{\rho}|_1)}\right)+\sum_{i=1}^d\log\left(
  \frac{\Gamma(x_{i,\rho}+\mu \psi_i)}{(x_{i,\rho})!\Gamma(\mu \psi_i)}\right),
\end{equation*}
where $\Gamma$ denotes the gamma function.
The likelihood of the tree is then updated
$$
W_{\rho+1}=W_{\rho}+\log(b).
$$
\subsection*{Estimation of the likelihood}
The above algorithm is independently repeated $N$ times, the estimate of the log-likelihood is
$$
\log\bigg\{\frac{1}{N}\sum_{i=1}^Ne^{W^{(i)}-\widetilde{W}}\bigg\} + 
\widetilde{W},
$$
where $W^{(i)}$ is the value of the final weight for sample $i$ and
$\widetilde{W}=\max_{1\leq i \leq N} W^{(i)}$. 

\section*{Appendix 3: Infinite Sites Model}

We now consider our results in the context of the infinite
sites model. We concentrate upon
likelihoods associated to rooted genealogical trees; see Ethier \& Griffiths (1987) or Griffiths \& Tavar\'e (1995) for more details.

\subsection*{The Model}
The model is based upon the simulation of distinct DNA sequences,
and the multiplicity of the sequences. In more details, the simulation
begins with a single DNA sequence $x_1=(0)$, and counts $n_1=2$. The process
can then undergo a mutation (rate $\mu$) or a split. If a mutation occurs (to the first sequence say)
we have the new state $x_1=(1,0)$, $x_2=(0)$ and $n=(1,1)$, otherwise the
new state is $x_1=(0)$ and $n_1=3$. 

The key point is that new mutations introduce
a new site (that is a new integer number (which is larger than all others
currently present) to the start of a selected sequence) and hence DNA sequence, whilst splits only increase the number of an existing
sequence. The state-space consists of the $d-$distinct sequences (vectors of potentially
different length sequences
$x_1,\dots,x_d$) and the respective counts $(n_1,\dots,n_d)$ of the sequences
that have been simulated. That is, in the previous notation
$$
z_i = ((x_{1:d})_i,(n_{1:d})_i).
$$
The simulation stops, as before, when $|(n_{1:d})_k|_1=n+1$.
In general, transitions are governed by the following Markov kernel. A mutation
(rate $\mu$), at time $j-1$, of the $l^{th}$
sequence occurs with probability
$$
\frac{1}{|(n_{1:d})_{j-1}|_1}\frac{\mu}{|(n_{1:d})_{j-1}|_1-1+\mu}
$$
and a split of the $l^{th}$ sequence occurs with probability
$$
\frac{|(n_{1:d})_{j-1}|_{1}-1}{|z_{j-1}|_1}\frac{|(n_{1:d})_{j-1}|_1-1}{|(n_{1:d})_{j-1}|_1-1 + \mu}
$$
see Ethier \& Griffiths (1987) and Griffiths (1989) for details on the transition dynamics.

In this scenario, the state-space is more complicated. 
Let
\begin{eqnarray*}
E_{d,r_{1}:r_d} &  = & \{(r_1,\dots,r_d)\in(\mathbb{Z}^+\cup\{0\})^d, (x_{1:r_1}^{1},\dots,
x_{1:r_d}^{d})\in (\mathbb{Z})^{r_1}\times\cdots\times ((\mathbb{Z})^{r_d}):
\\ & &  x_{1:r_1}^1\neq\cdots\neq x_{1:r_d}^d, i\in\mathbb{T}_d, x_{1}^{i}>\cdots>
x_{r_i}^{i}=0\}\;
\end{eqnarray*}
here $r_i$ are the lengths of the distinct sequences, and the ordering constraint
notes that the discovery of a new site is added to the beginning of the segment
vector.
In addition, let
\begin{eqnarray*}
F_{n,d} & = & \{n^{1:d}\in(\mathbb{Z}^+)^d\cap 2\leq |n^{1:d}|_1\leq n+1\}\\
E_n & = & \bigcup_{d_n\in\mathbb{T}_{n+1}}\{d_n\}\times E_{d_n,r_{1}:r_{d_n}}\times F_{n,d_n}
\end{eqnarray*}
then 
$$
F = \bigcup_{k\in\mathcal{K}_n}\{k\}\times E_n^{k}.
$$
There are three trans-dimensional aspects to the state-space; the time to simulate
$n+1$ sequences; the number of distinct sequences and the respective lengths
of the distinct sequences (which is determined in part by the first two aspects).

\subsection*{The Bias}

For the infinitely-many-sites model, we will use the idea of the 
first time the
number of segregating sites is $m$ (or mutations here) to stop the simulations backward
in time. In a similar manner to Section \ref{sec:stopcoal}, it can be established
that we want the approximating function $h_{\theta}(\cdot)$ to be the marginal
of the process at the last time we have $m$ segregating sites.

In the context of the infinitely-many-sites model, the bias is controlled
by our ability to approximate this marginal (see Remark 2 in Section \ref{sec:contrbias}). This is because the Markov transitions
can only change the multiplicity of counts, or increase the number of distinct
sequences; we are unable to change the beginning of sequences. As a result, it is not possible to establish conditions such as (A\ref{hyp:p_nassump}).

\subsection*{Approximating the Marginal}

We propose the following approximation of the marginal, based upon the theoretical
properties of such models (Ethier \& Griffiths, 1987;Griffiths, 1989) and the relation to the infinitely-many-alleles model (e.g.~Griffiths (1979) and the references
there-in). Let us consider the marginal distribution, call it $\xi_{\theta}$.
We extend the state-space to include uncertainty on $d$, the number of distinct
types, $c=|n_{1:d}|_1$ and $s$ the number of segregating sites, and adopt
the decomposition
$$
\xi_{\theta}(z_{1:d},d,s,c) = \xi_{\theta}(x_{1:d}|d,s)
\xi_{\theta}(n_{1:d}|d,c)\xi_{\theta}(d|c)\xi_{\theta}(s|c)\xi_{\theta}(c).
$$
Now, under certain conditions, there are results about the exact
densities $\xi_{\theta}(n_{1:d}|d,c)$ (Ewens, 1972) and $\xi_{\theta}(d|c)$ (Watterson, 1975). In the case $\xi_{\theta}(s|n)$, as noted by Griffiths (1979),
for large populations (such that diffusion results can apply) the infinitely-many-sites and infinitely-many-allele frequencies are not too different. Therefore,
we propose to use the probability (as in Ewens (1972))
$$
\xi_{\theta}(d|c) = \frac{\mu^d\Gamma(\mu)}{\Gamma(\mu+c)}|S_{c}^{(d)}|
$$
with $S_{c}^{(d)}$ are Stirling numbers of the first kind.

For the quantities $\xi_{\theta}(x_{1:d}|d,s)$ and $\xi_{\theta}(c)$,
we use approximations. For the former, a uniform distribution
is adopted
$$
\xi_{\theta}(x_{1:d}|d,s) = \bigg(\sum_{m_1,\dots,m_{s}\in C_{d,s}}
\bigg[\prod_{i=1}^s\binom{d}{m_i}\bigg]\bigg)^{-1}
$$
where
\begin{eqnarray*}
C_{d,s} & := & \{m_{1:s} : m_i\in\{0,\dots,d\},m_1+\dots+m_{s}\in \bar{C}_{d,s}\}\\
\bar{C}_{d,s} & := & \{d-1,\dots,\sum_{j=0}^{d-1}(s-j)\}.
\end{eqnarray*}
That is, it is a simple task in combinatorics to show that if there are 
$s$ mutations with $m_{1:s}$ repetitions of mutations $1$ to $s$, 
(subject to the constraint that each mutation can only occur at most once in each sequence and that order of allocating a mutation does not matter) then there are 
$$
\prod_{i=1}^s\binom{d}{m_i}
$$
possible sequences; summing over all the possible multiples yields the desired
cardinality of the state-space. $\xi_{\theta}(c)$ is not known (except as
the marginal of a recursion (as in Ethier \& Griffiths (1987))) and is assigned
$\mathcal{P}ois(nj/\theta)$ (Poisson) distribution (at time $j$). 

In practice, it may not be possible to evaluate some of these quantities
and a further Monte Carlo simulation/numerical approximation (for the integral
over $s$ and the normalizing constant of $\xi_{\theta}(x_{1:d}|d,l)$) will be required. That is to say, we set
$$
h_{\theta}(z_{1:d}) = \widehat{\xi}_{\theta}(z_{1:d}).
$$
The approximation will be different for every simulated sample.

\end{document}